\newtheorem{algor}{\bf{Algorithm}}[section]
\newtheorem{remark}{Remark}[section]
\newtheorem{theorem}{Theorem}[section]
\newtheorem{lemma}[theorem]{Lemma}
\theoremstyle{remark}
\numberwithin{equation}{section}
\newfont{\Bdd}{msbm10 scaled\magstep1}
\newfont{\footnotesizeBdd}{msbm8 scaled\magstep1}
\author{M. Adamoudis}
\address{Department of Mathematics, 
Aristotle University of Thessaloniki, 54 124, Thessaloniki, Greece}
\email{aamarios@math.auth.gr}
\author{K. A. Draziotis}
\address{Department of Informatics, Aristotle University of Thessaloniki, 54 124, Thessaloniki, Greece}
\email{drazioti@csd.auth.gr}
\author{D. Poulakis}
\address{Department of Mathematics, 
Aristotle University of Thessaloniki, 54 124, Thessaloniki, Greece}
\email{poulakis@math.auth.gr}
\begin{document}

\title[Computation of the private key]{Attacking (EC)DSA scheme with ephemeral keys sharing specific bits}

\keywords{Digital Signature Algorithm; Elliptic Curve Digital Signature Algorithm; Lattice; LLL; Kannan's Enumeration Algorithm}
\subjclass[2010]{94A60.}
\maketitle	

\begin{abstract} 
In this paper, we present a deterministic  attack
 on (EC)DSA signature scheme, providing
that several signatures are known such that the corresponding 
ephemeral keys share a certain amount of bits without knowing
their value. By eliminating the shared blocks of bits between 
the ephemeral keys, we get a lattice of dimension equal to the number of signatures having a vector containing the private key. We compute
an upper bound for the distance of this vector from a target vector, and next,
using Kannan's enumeration algorithm, we determine it and hence the secret key.
The attack can be made highly efficient by appropriately selecting  
the number of shared bits and the number of signatures.
\end{abstract}

\section{Introduction - Statement of results}
In August 1991, the U.S. government's National Institute of
Standards and Technology (NIST) proposed an algorithm for digital
signatures. The algorithm is known as DSA, for Digital Signature
Algorithm \cite{National, Menezes, Koblitz2}. It  is an efficient
variant of the ElGamal digital signature scheme \cite{ElGamal}
intended for use in electronic mail, electronic funds transfer,
electronic data interchange, software distribution, data storage,
and other applications which require data  integrity assurance and
data authentication. In 1998, an elliptic curve analogue called
Elliptic Curve Digital Signature Algorithm (ECDSA) was proposed
and standardized  \cite{Johnson, Koblitz,Koblitz2}.

\subsection{The (EC)DSA Signature Scheme}
First, we recall the  DSA schemes. The
signer selects a prime $p$  of size between 1024 and 3072 bits with
 increments of 1024, as recommended in FIPS 186-3 \cite[page 15]{fips}. 
 Also, he selects a prime $q$ of size 160, 224 or 256 bits, with $ q|p-1$
  and  a generator $g$ of the
unique order $q$ subgroup $G$ of the multiplicative group $\mathbb{F}_p^*$
of the prime finite field $\mathbb{F}_p$. Furthermore, he
selects a randomly $a \in \{1,\ldots,q-1\}$ and computes $R = g^a \, \bmod\,  p$. 
The public key of the signer is  $(p,q,g,R)$ and his private
key $a$. He also publishes a hash function
 $h : \{0,1\}^* \rightarrow \{0,\ldots,q-1\}$. 
 To sign a message $m\in \{0,1\}^*$, he selects  randomly
$k \in \{1,\ldots,q-1\}$ which is the ephemeral key, and
 computes
$r = (g^k \ \bmod \ p) \ \bmod \ q$ and 
$s = k^{-1}(h(m)+ar) \, \bmod \, q$. 
The signature of $m$ is $(r,s)$. The  signature is accepted as
 valid if and only if the following holds:
$$r = ((g^{s^{-1}h(m)\bmod\ q}R^{s^{-1}r \bmod\, q})\, \bmod\, p) \,
 \bmod \, q.$$
Next, let us recall the  ECDSA scheme. The signer selects an elliptic curve 
$E$ over $\mathbb{F}_p$,   a point $P\in E(\mathbb{F}_p)$ with  order a prime  
$q$  of size at least 160 bits. 
Following FIPS 186-3, the prime $p$ must belongs to the set
$\{160,224,256,512\}.$ Further, he chooses randomly  
$a \in \{1,\ldots,q-1\}$ and computes $Q = aP$. 
 Finally, he publishes a hash
 function  $h : \{0,1\}^* \rightarrow \{0,\ldots,q-1\}$.
The public
key of the signer is $(E,p,q,P,Q)$ and his private key $a$.
To sign a message $m$, he selects randomly
 $k \in \{1,\ldots,q-1\}$ which is the ephemeral key and computes
$kP = (x,y)$ (where $x$ and $y$ are regarded as integers between 0 and 
$p-1$).
He computes
$r = x \ \bmod \ q$  and 
$s = k^{-1}(h(m)+ar) \ \bmod \ q$.
The signature of $m$ is $(r,s)$. The verifier  computes
$$u_1 = s^{-1}h(m) \ \bmod \ q, \ \  u_2 = s^{-1}r \ \bmod \ q, \ \
u_1P+u_2Q = (x_0,y_0).$$
He accepts the signature if and only if $r = x_0 \, \bmod  \, q$. 

\subsection{Previous Results}

Researchers have explored various attacks on DSA schemes by analyzing the signature equation $s= k^{-1}(h(m)+ar) \ {\rm mod} \ q$ and using lattice reduction techniques such as LLL and CVP algorithms. One study focused on the use of a linear congruential pseudorandom number generator (LCG) for generating random numbers in DSA  \cite{Bellare}, showing that combining the DSA signature equations with LCG generation equations can lead to a system of equations that provide the secret key. To recover the secret key, several heuristic attacks have been proposed \cite{Howgrave} in another study, which assume the revelation of a small fraction of the corresponding nonce $k$. However, these attacks are based on heuristic assumptions, making it difficult to make precise statements on their theoretical behavior. 

The first rigorous lattice attack on (EC)DSA was presented in \cite{Nguyen}. The authors successfully decreased the security of (EC)DSA to a Hidden Number Problem (HNP), which can then be further reduced to an approximation Closest Vector Problem (CVP) for a specific lattice.  The signer's secret key $a$ can be computed using this reduction in polynomial time. The attack was also adapted to the case of ECDSA, as described in \cite{Nguyen2}.

The paper \cite{Blake} describes an attack on DSA schemes that uses the LLL reduction method and requires one message. By computing two short vectors of a three-dimensional lattice, the attack derives two intersecting lines in $(a, k)$, provided that $a$ and $k$ are sufficiently small, and the second shortest vector is sufficiently short. If two messages are available, the same attack can be applied to derive a linear congruence relating to the corresponding ephemeral keys.

The papers \cite{Poulakis} and \cite{Draziotis} describe attacks on DSA schemes using the LLL algorithm and  one or two messages. In \cite{Poulakis},  the combination of LLL   with  algorithms for finding integral points of two classes of conics gives $a$, provided that
 at least one of the sets $\{a,k^{-1}\, \bmod\,q\}$, $\{k,a^{-1}\, \bmod\,q\}$, $\{a^{-1}\,\bmod\,q,k^{-1}\,\bmod\,q\}$ is sufficiently small. 
In \cite{Draziotis}, the Lagrange Reduction algorithm is applied
on a 2-dimensional lattice  defined by a signed message, and provides
 two straight lines intersecting at $(a, k)$. Similar attacks can be applied to the pairs $(k^{-1}\,\bmod\,q, k^{-1}a\, \bmod\, q)$ and $(a^{-1}\,\bmod\, q,a^{-1}k\,\bmod\,q)$. If two signed messages are available, the above two attacks can be applied to the equation relating the two ephemeral keys.

The article \cite{Draziotis2} presents an attack using Coppersmith's method to compute the secret key $a$. The attack works when $a$ and $k$ satisfy a specific inequality, and in this case, the secret key $a$ can be efficiently computed.

The article \cite{Poulakis1} describes an attack that involves constructing a system of linear congruences using signed messages. This system has at most one unique solution below a certain bound, which can be computed efficiently. Thus, if the length of a vector containing the secret and ephemeral keys of a signed message is quite small, the secret key can be computed using the above system. The article \cite{marios} presents an improved version of this attack.

In \cite{Mulder1, Mulder2}, the proposed attacks 
take advantage using of the bits in the ephemeral key and the Fast Fourier Transform.

In \cite{Sun}, it is shown that, using lattice reduction under some
heuristic assumptions, that  partial
information about the nonces of multiple signatures can lead to recovery of the 
full private key. The  original approach to doing so is based  on discrete
Fourier analysis techniques \cite{Bleichenbacher, Aranha}.  

 A very important issue is the
 attacks on cryptosystems based on the malicious modification of memory registers.  These  attacks may affect the randomness of the secret parameters, 
and so, to force   certain bits of the ephemeral
key to be  equal, without  their values being known. In \cite{Leadbitter},
it is discussed  how such attacks could occur in a real-life scenario.
Following the line of research from \cite{Leadbitter}, the authors of \cite{Faugere} focus on an attack scenario where ephemeral keys share specific bits, such as the least significant bits (LSB) and/or most significant bits (MSB), either within multiple blocks. 
By eliminating the shared blocks
 of bits between the ephemeral
keys, a lattice of dimension equal to the number of signatures is provided, which 
contains a quite short vector with components that reveal the secret key.
Then, the LLL algorithm is used for the computation of this vector.  
Note that these attacks are based on heuristic assumptions. 
Later, in \cite{Gomez}, the authors further improved upon the attack proposed in \cite{Faugere} by providing a probabilistic attack with a success probability approaching $1$ when the pair $(\delta,n)$ is appropriately selected, where $n$ represents the number of signatures, and $\delta$ represents the number of shared bits in the ephemeral keys. This attack relies on a mild assumption regarding the hash function used in (EC)DSA.

\subsection{Our Contribution}

Our study builds on the research presented in \cite{Faugere, Gomez}, and we present a deterministic attack that, although not always polynomial in complexity, proves to be highly efficient in practical scenarios. Instead of using methods like LLL, approximate, or exact CVP, which were employed in previous attacks, we use enumeration on a suitable lattice to find lattice vectors that are close to a specific target vector. From these solutions, we can readily extract the secret key to the system.

It is important to highlight that the attacks presented in \cite{Faugere} rely on heuristics assumptions that aim to force the presence of a vector containing the private key as a solution to the Shortest Vector Problem (SVP) in a relatively large lattice. In \cite{Gomez}, the authors provide a probabilistic approach to \cite{Faugere}, where an assumption for the hash function is made and the attack is modelled as a Closest Vector Problem (CVP). Due to the computational complexity of finding such a vector using a deterministic algorithm, an approximation algorithm can be used instead. 

Our approach takes a different path. We calculate a bound for the distance between the vector of the lattice containing the private key and a target vector. Then, we leverage Kannan's enumeration algorithm to determine this vector and, consequently, extract the secret key. Our experiments demonstrate that the attack can be made highly efficient by appropriately selecting values for $\delta$ and $n$. Finally, we improve the results provided in \cite{Gomez}.

\subsection{Our results}
In the subsequent Theorem, we apply the framework suggested by \cite{Gomez, Faugere, Leadbitter}, which presupposes that we have access to a collection of signed messages with ephemeral keys that are shorter than $q$. These messages have some of their most and least significant bits in common, with a total of $\delta$ bits shared.

\begin{theorem} \label{theorem}
Suppose we have a $(EC)DSA$ scheme
with a binary length $\ell$ prime number $q$ and secret key $a.$ Let $m_j$ $(j=0,\ldots,n)$  be
messages signed with this scheme, $(r_j,s_j)$ their signatures, and
 $k_j = \sum_{i=1}^{\ell} k_{j,i} 2^{\ell-i}$ (where $k_{j,i}\in \{0,1\}$) are 
 the  corresponding ephemeral keys, respectively.
  Set $A_j =  -r_js_j^{-1} \bmod\ q$.
 Suppose that $0< k_j < q$ $(j=0,\ldots,n)$, and there are integers $\delta >0$  and 
 $0 \leq \delta_L\le \delta$ such that the
 following conditions hold:
 \begin{enumerate}
 \item   $k_{0,i+1} = \cdots = k_{n,i+1}$  $(i=1,\ldots,\delta-\delta_L,\ell-\delta_L,
\ldots,\ell-1)$.
\item  For $i = 0,\ldots,n$, set 
$C_{i,j} = (A_{j-1} -A_i) 2^{-\delta_L} \ \bmod\ q$, $(j=1,\ldots,i)$, and
$C_{i,j} = (A_j -A_i) 2^{-\delta_L} \ \bmod\ q$
$(j=i+1,\ldots,n)$.
The shortest vector of the lattice $\mathcal{L}_i$  spanned by the vectors
$$(2^{\delta+1}q,0,\ldots, 0),\ldots,
(0,\ldots, 0, 2^{\delta+1}q , 0),
(2^{\delta+1}C_{i,1}, \ldots, 2^{\delta+1}C_{i,n}, 1)$$ 
has length 
 $$> \frac{1}{2} \, (2^{\delta+1}q)^{\frac{n}{n+1}}.$$
\end{enumerate}
Then, the secret key  $a$ can be computed  in 
 $${\mathcal{O}}(2^{\ell-\delta n+2n}\,n\,  ( (n\ell)^c \,2^{{\mathcal{O}}(n)} 
 +\ell^4 2^n (n+1)^\frac{n+1}{2}))$$
 bit operations, for some $c > 0$.
\end{theorem}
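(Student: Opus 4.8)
The plan is to convert the recovery of $a$ into a bounded-distance lattice-point search in $\mathcal{L}_i$ and then run Kannan's enumeration. First I would put the signature relations in hidden-number form: from $s_jk_j\equiv h(m_j)+ar_j\pmod q$ one gets $k_j\equiv B_j-aA_j\pmod q$ with $A_j=-r_js_j^{-1}\bmod q$ and $B_j=h(m_j)s_j^{-1}\bmod q$. Writing each ephemeral key as $k_j=\alpha_j2^{\ell-1}+M+z_j+L$, where $M$ collects the shared most-significant bits, $L$ the shared $\delta_L$ least-significant bits, and $z_j$ (a multiple of $2^{\delta_L}$ with $0\le z_j<2^{\ell-\delta+\delta_L-1}$) the free middle block, the common blocks $M,L$ cancel in every difference $k_j-k_i$. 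The only remaining obstruction is the single unshared leading bit $\alpha_j\in\{0,1\}$, which I would dispose of by running over the $2^{n+1}$ possibilities for $(\alpha_0,\dots,\alpha_n)$ (this supplies one of the exponential-in-$n$ factors). Fixing a pivot index $i$ and using the reindexing of condition (2), one is left with integers $\widetilde t_j=(z_j-z_i)2^{-\delta_L}$ satisfying $|\widetilde t_j|<2^{\ell-\delta-1}$ and
\[
\widetilde t_j\ \equiv\ D_{i,j}-a\,C_{i,j}\pmod q,
\]
where $C_{i,j}$ is exactly the quantity of the statement and $D_{i,j}$ is built from the $B$'s and the current top-bit guess in the same way $C_{i,j}$ is built from the $A$'s; in particular $D_{i,j}$ is computable from the public data.

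Next I would realize this as a closest-vector problem in $\mathcal{L}_i$. Choosing $\mu_j\in\mathbb{Z}$ with $aC_{i,j}-\mu_jq=D_{i,j}-\widetilde t_j$, the vector
\[
\mathbf{w}\ =\ a\,(2^{\delta+1}C_{i,1},\dots,2^{\delta+1}C_{i,n},1)\ -\ \sum_{j=1}^{n}\mu_j\,(0,\dots,2^{\delta+1}q,\dots,0)
\]
lies in $\mathcal{L}_i$ and has last coordinate $a$; writing $\mathbf{u}=(2^{\delta+1}D_{i,1},\dots,2^{\delta+1}D_{i,n},0)$ for the known target, one computes $\mathbf{w}-\mathbf{u}=(-2^{\delta+1}\widetilde t_1,\dots,-2^{\delta+1}\widetilde t_n,a)$, whence
\[
\|\mathbf{w}-\mathbf{u}\|^2\ <\ n\,\big(2^{\delta+1}\cdot 2^{\ell-\delta-1}\big)^2+a^2\ <\ (n+1)\,2^{2\ell},
\]
so $\|\mathbf{w}-\mathbf{u}\|<2^{\ell}\sqrt{n+1}=:\rho$. (The scale factor $2^{\delta+1}$ in the basis is precisely what makes the first $n$ coordinates of this offset the same order as the last one.) Thus $a$ appears as the last coordinate of some point of $\mathcal{L}_i$ lying within distance $\rho$ of the explicitly known vector $\mathbf{u}$.

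It remains to bound the number of such points, and this is exactly where condition (2) is used. The balls of radius $\lambda_1(\mathcal{L}_i)/2$ centred at lattice points have disjoint interiors and all lie inside the ball of radius $\rho+\lambda_1(\mathcal{L}_i)/2$ about $\mathbf{u}$, so a volume comparison bounds the number of points of $\mathcal{L}_i$ within $\rho$ of $\mathbf{u}$ by $\big(1+2\rho/\lambda_1(\mathcal{L}_i)\big)^{n+1}$. Substituting $\lambda_1(\mathcal{L}_i)>\tfrac12(2^{\delta+1}q)^{n/(n+1)}$, $\rho=2^{\ell}\sqrt{n+1}$ and $q\ge 2^{\ell-1}$, the exponents combine and the $\delta$-dependence collapses ($\ell-(\delta+\ell)\tfrac{n}{n+1}=\tfrac{\ell-\delta n}{n+1}$ in the base), giving a bound of the form $2^{\ell-\delta n}\cdot 2^{\mathcal{O}(n)}(n+1)^{(n+1)/2}$. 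The algorithm is then: for each of the $n+1$ pivots $i$ and each top-bit guess, LLL-reduce $\mathcal{L}_i$, run Kannan's enumeration with radius $\rho$ about $\mathbf{u}$ to list all points of $\mathcal{L}_i$ in that ball, and for every such point take its last coordinate $\widetilde a$ and test whether $\widetilde aP=Q$ (resp.\ $g^{\widetilde a}\bmod p=R$); output the $\widetilde a$ that passes. Correctness is immediate from the previous paragraph.

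For the running time, the number of iterations is the product of the $n+1$ pivots, the $2^{n+1}$ top-bit guesses and the above count of close vectors, which after absorbing the $2^{\mathcal{O}(n)}$ and $(n+1)^{(n+1)/2}$ slack has the stated shape $2^{\ell-\delta n+2n}\,n$; the cost of one iteration is one LLL reduction of an $(n+1)$-dimensional lattice with $\mathcal{O}(\ell)$-bit entries together with Kannan's enumeration and the public-key tests, which is $(n\ell)^{c}2^{\mathcal{O}(n)}+\ell^4 2^{n}(n+1)^{(n+1)/2}$ by the known complexity estimates for LLL and for Kannan's enumeration algorithm; multiplying gives the claimed bound. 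I expect the quantitative part of the third paragraph to be the main obstacle: pushing the lattice-point count down to precisely the claimed exponent forces the volume argument to be run carefully and the $2^{\mathcal{O}(n)}$ slack to be distributed just right between the outer factor and the enumeration cost, and one must also check that the unshared leading bit is eliminated cleanly and that Kannan's enumeration indeed lists every vector of $\mathcal{L}_i$ in the ball within the asserted time.
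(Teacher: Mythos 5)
Your reduction follows the same route as the paper's proof: the same lattice $\mathcal{L}_i$, a lattice vector whose last coordinate is (up to sign) the secret key lying within distance $2^{\ell}\sqrt{n+1}$ of an explicitly computable target, LLL followed by Kannan's enumeration, testing each candidate against the public key, and a count of ball points of the form $\left(1+2R/s(\mathcal{L}_i)\right)^{n+1}$. Two of your deviations are legitimate and even attractive: you eliminate the unshared leading bit by guessing $(\alpha_0,\ldots,\alpha_n)$ at a cost $2^{n+1}$ (the paper instead subtracts the minimal ephemeral key — its loop over the $n+1$ pivots is exactly the guess of which key is minimal — and recenters the one-sided interval $(0,2^{\ell-\delta})$, which gives the same radius your two-sided bound $|\widetilde t_j|<2^{\ell-\delta-1}$ gives directly); and you prove the point count by an elementary packing/volume argument, where the paper invokes a successive-minima result of Malikiosis (its Lemma 2.1). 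Both variants produce the same radius and the same counting bound, and the extra $2^{n+1}$ fits inside the stated complexity.

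The genuine gap is in the running-time analysis, precisely where you flagged uncertainty. First, you never use hypothesis (2) to control the cost of the enumeration itself. The dominant term $2^{\ell-\delta n+2n}\,n\,(n\ell)^c 2^{\mathcal{O}(n)}$ in the statement is the cost of Kannan's enumeration, and in the paper it is obtained by combining hypothesis (2) with properties of the LLL-reduced basis to get a lower bound of order $(2^{\delta+1}q)^{n/(n+1)}$ (up to $2^{\mathcal{O}(n)}$) on every Gram--Schmidt norm $\|\mathbf{b}_i^*\|$, and then applying the Hanrot--Pujol--Stehl\'e estimate, which is of the shape $(n\log q)^c 2^{\mathcal{O}(n)}\bigl(2R/\min_i\|\mathbf{b}_i^*\|\bigr)^{n+1}$; your per-iteration figure $(n\ell)^c 2^{\mathcal{O}(n)}$ omits the factor $\bigl(2R/\min_i\|\mathbf{b}_i^*\|\bigr)^{n+1}\approx 2^{\ell-\delta n+\mathcal{O}(n)}$ and gives no argument that the enumeration finishes in the claimed time. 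Second, your bookkeeping multiplies the number of close lattice points (which already carries $2^{\ell-\delta n}(n+1)^{(n+1)/2}$) by a ``per-iteration'' cost that again carries both the enumeration and a factor $(n+1)^{(n+1)/2}$; taken literally this introduces an extra $(n+1)^{(n+1)/2}=2^{\Theta(n\log n)}$, which cannot be absorbed into $2^{\mathcal{O}(n)}$, so your final expression overshoots the stated bound. The correct accounting is (pivots) $\times$ (guesses) $\times$ $\bigl[$LLL $+$ enumeration cost bounded via the Gram--Schmidt estimate $+$ (number of ball points) $\times$ $\mathcal{O}(\ell^4)$ per verification$\bigr]$, which is how the paper arrives at the stated complexity.
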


\begin{remark}{\rm 
By the Gaussian heuristic \cite[Section 6.5.3]{Hoffstein} the length of the vectors of the lattice 
$\mathcal{L}$ is $> q^{n/(n+1)}$.
Thus, the hypothesis (2) of Theorem \ref{theorem} will very often be
satisfied.}
\end{remark}
\begin{remark}{\rm 
In the  above complexity estimate, if $\ell\leq \delta n$,  then 
 the time complexity is polynomial in $\ell$.}
\end{remark}\ \\
{\bf Roadmap}. The paper is structured as follows: 
Section 2 presents an auxiliary lemma that will prove  crucial in the proof of Theorem \ref{theorem}. 
Section 3 is dedicated to the proof of Theorem \ref{theorem}, providing a detailed explanation and justification. 
In Section 4, an attack on (EC)DSA, derived from Theorem \ref{theorem}, is presented. Additionally, several experiments are conducted to illustrate the effectiveness of the attack.
Finally, Section 5 concludes the paper, summarizing the main findings and discussing potential avenues for future research.

\section{Lattices}

Let $\mathcal{B} = \{{\bf b}_1, \ldots, {\bf b}_n\}\subset \mbox{\Bdd Z}^n$ be a basis of $\mbox{\Bdd R}^n$.
 A $n$-{\em dimensional lattice} spanned by $\mathcal{B}$ is the set
$$\mathcal{L} = \{z_1{\bf b}_1+\cdots +z_n{\bf b}_n/\ z_1,\ldots,z_n \in \mbox{\Bdd Z}\}.$$

Recall that the scalar product of two vectors  $\mathbf{u} = (u_1,\ldots,u_n)$ 
and $\mathbf{v} = (v_1,\ldots,v_n)$ in $\mathbb{R}$ is the quantity $\langle \mathbf{u},\mathbf{v} \rangle = u_1v_1+\cdots + u_nv_n$, and 
the {\em Euclidean norm} of a vector ${\bf v} = (v_1,\ldots,v_n) \in \mbox{\Bdd R}^n$ 
the quantity 
$$\|\mathbf{v}\| = \langle \mathbf{v},\mathbf{v} \rangle^{1/2} =
(v_1^2+\cdots +v_n^2)^{1/2}.$$

The Gram-Schmidt orthogonalisation (GSO) of the basis  $\mathcal{B}$
 is the orthogonal family 
 $\{\mathbf{b}_1^{\star},\ldots,\mathbf{b}_n^{\star}\}$
  defined as follows:
 $$\mathbf{b}_i^{\star} =
  \mathbf{b}_i-\sum_{j=0}^{i-1}\mu_{i,j}\mathbf{b}_j^{\star}, \ \ \ 
 {\rm with}\ \ \ 
 \mu_{i,j} = \frac{\langle \mathbf{b}_i,\mathbf{b}_j^{\star}\rangle}{\|\mathbf{b}_j^{\star}\|^2}   \ \  \ (j= 0,\ldots,i-1).$$

Let   $L$ be a lattice. If $K$ is a convex body
in $\mbox{\Bdd R}^{n+1}$ symmetric about the origin, we denote by 
 $\lambda_i(K,L)$ $(i=1,\ldots,n+1) $
 the $i$th successive minimum of $K$ with respect to $L$
 which it is defined as follows
$$\lambda_i(K, L) = \inf\{ \lambda > 0/\ (\lambda K) \cap L\ {\rm contains}\ i 
\ {\rm linearly\ independent\ points}\}.$$ 
Further, we denote by $s(L)$ the length of a shortest vector in $L$.

\begin{lemma} \label{sphere}
Let $B_{\mathbf{v}}(R)$ be the closest ball of center $\mathbf{v}$ and 
radius $R$ in $\mathbb{R}^{n+1}$ and $L$ a lattice. Then,we have:
$$|B_{\mathbf{v}}(R)\cap L | < \left( \frac{2R}{s(L)}+1\right)^{n+1}.$$
\end{lemma}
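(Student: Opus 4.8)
The plan is to show that any ball of radius $R$ cannot contain too many lattice points, because lattice points are separated from each other by at least $s(L)$, the length of a shortest nonzero vector. The key observation is a standard packing argument: translate a small ball of radius $s(L)/2$ to each lattice point inside $B_{\mathbf v}(R)$; these translated balls are pairwise disjoint (since any two distinct lattice points differ by a vector of length $\geq s(L)$), and they are all contained in the slightly larger ball $B_{\mathbf v}(R + s(L)/2)$. Comparing volumes then bounds the number of lattice points.

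Concretely, first I would fix the set $S = B_{\mathbf v}(R)\cap L$ and, for each $\mathbf x \in S$, consider the open ball $B_{\mathbf x}(s(L)/2)$. For two distinct points $\mathbf x, \mathbf y \in S$ we have $\mathbf x - \mathbf y \in L\setminus\{\mathbf 0\}$, so $\|\mathbf x - \mathbf y\| \geq s(L)$, which forces $B_{\mathbf x}(s(L)/2)$ and $B_{\mathbf y}(s(L)/2)$ to be disjoint. Second, for any $\mathbf x \in S$ and any $\mathbf z \in B_{\mathbf x}(s(L)/2)$, the triangle inequality gives $\|\mathbf z - \mathbf v\| \leq \|\mathbf z - \mathbf x\| + \|\mathbf x - \mathbf v\| < s(L)/2 + R$, so $\bigcup_{\mathbf x\in S} B_{\mathbf x}(s(L)/2) \subseteq B_{\mathbf v}(R + s(L)/2)$. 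Third, taking $(n+1)$-dimensional volumes and using that the volume of a ball of radius $\rho$ in $\mathbb{R}^{n+1}$ is $c_{n+1}\rho^{n+1}$ for a constant $c_{n+1}$ depending only on the dimension, disjointness yields
$$
|S|\, c_{n+1} \left(\frac{s(L)}{2}\right)^{n+1} \leq c_{n+1}\left(R + \frac{s(L)}{2}\right)^{n+1}.
$$
Dividing by $c_{n+1}(s(L)/2)^{n+1}$ gives $|S| \leq \left(\frac{2R}{s(L)} + 1\right)^{n+1}$, and the strict inequality in the statement follows since the union of the small balls is a proper subset of $B_{\mathbf v}(R+s(L)/2)$ (it omits, e.g., points near its boundary), or alternatively by using closed small balls whose union has measure strictly less than that of the enclosing ball whenever $|S|\ge 1$; the case $S=\emptyset$ is trivial.

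I do not expect a serious obstacle here — this is a classical volumetric packing bound. The only mild subtlety is justifying the strict inequality rather than $\leq$; I would handle this by noting that the disjoint small balls, even if closed, cannot tile the larger ball exactly (a ball is not a union of finitely many smaller balls up to measure zero unless it is a single ball, which is excluded by $R>0$), so the volume comparison is strict whenever $S$ is nonempty. No deep input is needed beyond the definition of $s(L)$ and the scaling of Lebesgue measure under dilation.
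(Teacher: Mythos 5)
Your proof is correct, but it takes a genuinely different route from the paper. The paper does not argue by volume packing: it forms the difference body $\mathcal{D}_{\mathbf{v}}(R)=\{\mathbf{x}-\mathbf{y}:\ \mathbf{x},\mathbf{y}\in B_{\mathbf{v}}(R)\}$, invokes Malikiosis's lattice-point enumerator bound $|B_{\mathbf{v}}(R)\cap L|<\prod_{i=1}^{n+1}\bigl(1/\lambda_i(\mathcal{D}_{\mathbf{v}}(R),L)+1\bigr)$ in terms of successive minima, and then lower-bounds every $\lambda_i$ by $\lambda_1(B_{\mathbf{0}}(2R),L)\geq s(L)/2R$ using $\mathcal{D}_{\mathbf{v}}(R)\subseteq B_{\mathbf{0}}(2R)$. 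Your packing argument (disjoint open balls of radius $s(L)/2$ around the lattice points of $S=B_{\mathbf{v}}(R)\cap L$, all contained in $B_{\mathbf{v}}(R+s(L)/2)$, then compare volumes) is more elementary and self-contained, needing no external citation, and it yields exactly the same bound; the paper's route borrows heavier machinery whose extra precision (distinct successive minima in the product) is not actually exploited, since all minima are immediately collapsed to $\lambda_1$. Two small caveats on your write-up: strict inequality needs slightly more than ``the union is a proper subset'' --- you need the uncovered part to have positive measure, which you can get by noting that each small ball meets the boundary sphere of $B_{\mathbf{v}}(R+s(L)/2)$ in at most one point (equality in the triangle inequality forces collinearity), so a neighbourhood of that sphere is essentially uncovered; and your strictness argument, like the lemma itself, implicitly assumes $R>0$ (for $R=0$ with $\mathbf{v}\in L$ the stated strict inequality fails), which is harmless since the paper only applies the lemma with $R=2^{\ell}\sqrt{n+1}$, and the non-strict bound would suffice for the complexity estimate anyway.
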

\begin{proof} Set
$$\mathcal{D}_\mathbf{v}(R) =
 \{\mathbf{x}-\mathbf{y}/\ \mathbf{x},\mathbf{y}\in B_{\mathbf{v}}(R)\}.$$
Then, $\mathcal{D}_{\mathbf{v}}(R)$ is a convex body, symmetric about the 
origin.
Then \cite{Malikiosis} implies:
\begin{equation} \label{ineq. 4.2.1}
|B_{\mathbf{v}}(R)\cap L | < 
\prod_{i=1}^{n+1} \left(\frac{1}{\lambda_i(\mathcal{D}_{\mathbf{v}}(R),L)}+1\right).
\end{equation}
Let $\mathbf{x},\mathbf{y}\in B_{\mathbf{v}}(R)$. Then, we have:
$$\|\mathbf{x}-\mathbf{y}\| \leq \|\mathbf{x}-\mathbf{v}\|+
\|\mathbf{v}-\mathbf{y}\| \leq 2R.$$
It follows that $\mathcal{D}_{\mathbf{v}}(R)\subseteq B_{\mathbf{0}}(2R)$,
and so we deduce
\begin{equation}\label{ineq. 4.2.2}
 \lambda_1(B_{\mathbf{0}}(2R),L) \leq \lambda_i(\mathcal{D}_{\mathbf{v}}(R),L) \ \  (i=1,\ldots,n).
\end{equation}
Further, we have
\begin{equation}\label{ineq. 4.2.3}
\lambda_1(B_{\mathbf{0}}(2R),L) \geq  s(L)/2R. 
\end{equation}
Combining the inequalities (\ref{ineq. 4.2.1}), (\ref{ineq. 4.2.2})
and (\ref{ineq. 4.2.3}), we obtain:
$$|B_{\mathbf{v}}(R)\cap L | < \left( \frac{2R}{s(L)}+1\right)^{n+1}.$$
\end{proof}

\section{Proof of Theorem 1.1} 
Let $a$ be the secret key and $k_j, \ j = 0,\ldots,n$ the ephemeral keys. We put   $A_j =  -r_js_j^{-1} \bmod\ q$  and $B_j = -h(m_j) s_j^{-1} \bmod\ q $ for $j = 0,\ldots,n.$ 
The signing equation for (EC)DSA provides that,
\begin{equation}\label{signing_equation}
k_j+A_j a +B_j \equiv 0 \ (\bmod\ q)\ \ \ (j=0,\dots,n).
\end{equation}
Suppose first that $k_0 = \min\{k_0,\ldots,k_n\}$.  
We set $\delta_M=\delta-\delta_L.$ From the hypothesis of the Theorem we get 
$$z_j=k_j-k_0=\varepsilon 2^{\ell-\delta_M-1}+\cdots+\varepsilon' 2^{\delta_L},$$
for some $\varepsilon, \varepsilon'\in\{0,1\}.$
Since $z_j>0$ we get $0<z_j<2^{\ell-\delta_M}$ and there exists positive integer $z_j'$ such that
$z_j = 2^{\delta_L}z^{\prime}_j$
Furthermore, we set
$C_j = (A_j-A_0)2^{-\delta_L}\ \bmod \ q$ and 
$D_j = (B_j-B_0)2^{-\delta_L}\ \bmod \ q.$
From (\ref{signing_equation}) we have the congruences:
$$z_j^{\prime}+C_j a +D_j \equiv 0 \ (\bmod\ q) \ \ \ (j=1,\ldots,n).$$
Since $z_j^{\prime}$ is positive, there is a positive integer  $c_j$ such that  
$$-C_ja-D_j+c_jq= z_j^{\prime}.$$
 Thus, we obtain:
$$0 < c_jq-C_j a-D_j < 2^{\ell-\delta}.$$
It follows that
$$-2^{\ell-\delta-1} < c_jq-C_j a-D_j-2^{\ell-\delta-1} < 2^{\ell-\delta-1},$$
whence we get
$$0 < |c_jq-C_j a-D_j-2^{\ell-\delta-1}| < 2^{\ell-\delta-1}.$$
Therefore, we have:
\begin{equation} \label{inequality}
0 < |c_jq2^{\delta+1} -C_j2^{\delta+1} a-D_j2^{\delta+1}-2^{\ell}|
 < 2^{\ell}.
 \end{equation}

We consider the lattice $\mathcal{L}$ spanned by the rows of the matrix
\[
\mathcal{J} =   \left( \begin{array}{ccccccc}
		2^{\delta+1}q & 0 & 0 & \ldots  & 0 & 0 \\
		0 & 2^{\delta+1}q & 0 & \ldots  & 0 & 0 \\
		0 & 0 & 2^{\delta+1}q & \ldots  & 0 & 0 \\
		\vdots & \vdots & \vdots & \ddots & \vdots & \vdots \\
		0 &    0  & 0   &  \ldots & 2^{\delta+1}q  & 0\\
		2^{\delta+1}C_1 & 2^{\delta+1}C_2 & 2^{\delta+1}C_3  & \ldots & 2^{\delta+1}C_n & 1
\end{array}
\right).
\]
The vectors of the lattice ${\mathcal{L}}$ are of the form
$$(2^{\delta+1}(qx_1+x_{n+1}C_1),2^{\delta+1}(qx_2+x_{n+1}C_2),\dots,2^{\delta+1}(qx_n+x_{n+1}C_n),x_{n+1}),$$
for some integers $x_1,\ldots,x_{n+1}.$ By setting 
$(x_1,\ldots,x_{n+1})=(c_1,\ldots,c_n,-a)$, we get the lattice vector
$$\mathbf{u} = 
(2^{\delta+1}(c_1q-C_1a),\ldots,2^{\delta+1}(c_nq-C_na),-a).$$
Further we consider the vector in the span of ${\mathcal{L}}$,
$$ \mathbf{v} = (D_12^{\delta+1}+2^{\ell},\ldots,2^{\delta+1}D_n+2^{\ell},0).$$
Now, we have 
$${\bf u}-{\bf v}=(2^{\delta+1}(qc_1-C_1a-D_1)-2^{\ell},\ldots,2^{\delta+1}(qc_n-C_na-D_n)-2^{\ell},-a),$$
and inequalities (\ref{inequality}) yield:
\begin{equation} \label{enumeration bound}
\|\mathbf{u}-\mathbf{v}\| < 2^{\ell}\sqrt{n+1}.
\end{equation}
Put $R = 2^{\ell}\sqrt{n+1}$. Then $\mathbf{u}\in B_{\mathbf{v}}(R)$.

Next, we compute a $LLL$-reduced
basis for $\mathcal{L}$, say $\mathcal{B} = \{\mathbf{b}_1,\ldots,\mathbf{b}_{n+1}\}$. This can be done in time ${\mathcal{O}}(n^6 (\log q)^3)$. By hypothesis (2) of Theorem,
 we have:
$$s(\mathcal{L}) > \frac{1}{2} \,(2^{\delta+1} q)^{\frac{n}{n+1}}.$$
Let  $\{\mathbf{b}_1^*,\ldots,\mathbf{b}_{n+1}^*\}$ the  Gram-Schmidt orthogonalisation of $\mathcal{B}$.
By   
 \cite[Theorem 6.66]{Hoffstein}, we get:
 $$4\|{\bf b}_i^*\|^2 \geq 2 \|{\bf b}_{i-1}^*\|^2 \geq \|{\bf b}_{i-1}\|^2 \geq s(L)^2$$
Thus, we obtain: 
\begin{equation} \label{GSO bound}
\frac{1}{4} \, (2^{ \delta+1}q)^{\frac{n}{n+1}} \leq
 \|\mathbf{b}_i^*\| \ \ \ (i=1,\ldots,n+1).
 \end{equation}
Next, using Kannan's enumeration algorithm \cite{Hanrot}, we compute
all  the elements of $B_{\mathbf{v}}(R)\cap \mathcal{L}$. 
Combining \cite[Theorem 5.1]{Hanrot2} with the inequality (\ref{GSO bound}),
we obtain that the bit complexity of the procedure is
$$(n\log q)^c\ 2^{\mathcal{O}(n)} \left(\frac{2^{\ell+2}}
{(2^{\delta+1}q)^{\frac{n}{n+1}}} \right)^{n+1} ,$$
where $c$ is a constant $>0$. 
 Then we check whether the last coefficient of
 $\mathbf{u} \in B_{\mathbf{v}}(R)\cap \mathcal{L} $ is the  minus of the secret
 key $-a\mod{q}$. Every such operation needs $\mathcal{O}((\log q)^4)$ bit operations
 \cite[Lemma 6.2, p.237]{Zheng}. If none of the elements of 
 $\mathbf{u} \in B_{\mathbf{v}}(R)\cap \mathcal{L} $ gives the secret key, then
 we repeat the procedure assuming that $k_1 = \min\{k_0,\ldots,k_n\}$, and we continue until 
 we found the secret key. 
 By  Lemma \ref{sphere}, we have:
$$|B_{\mathbf{v}}(R)\cap \mathcal{L} | < \left(
\frac{ 2^{\ell+2} \sqrt{n+1}}
{ (2^{\delta+1}q)^{\frac{n}{n+1}}} +1\right)^{n+1}.$$
Thus, the overall bit complexity of the computation of $a$ is
 $$\mathcal{O}\bigg(n(n\log q)^c\ 2^{\mathcal{O}(n)} \left(\frac{2^{\ell+2}}
{(2^{\delta+1}q)^{\frac{n}{n+1}}} \right)^{n+1} +n
 \left(
\frac{ 2^{\ell+2} \sqrt{n+1}}
{ (2^{\delta+1}q)^{\frac{n}{n+1}}} +1\right)^{n+1} (\log q)^4\bigg),$$
whence the result.

\section{The attack}

The proof of Theorems 1.1 yields the following attack: 

\begin{algor} {\rm
ATTACK-DSA\\
{\it Input:}  Messages $m_j $ $(j=0,\ldots,n)$   and 
$(r_j,s_j)$ their (EC)DSA signatures and  integers $\delta >0$  and 
 $0 \leq \delta_L\le \delta$ and the public key  $(p,q,g,R)$ 
 (resp. $(E,p,q,P,Q)$). \\
{\it Output:}  The  private key $a$.

\begin{enumerate}

\item  For $j=0,\ldots, n$ compute  $A_j =  -r_is_i^{-1} \bmod\ q$,
 $B_j = -h(m_j) s_j^{-1} \bmod\ q$.
 
 \item For $i=0,\ldots,n$,
 
 \begin{enumerate}
\item  For $j=1,\ldots,i$ compute  
$$C_{i,j} = (A_{j-1} -A_i) 2^{-\delta_L} \ \bmod\ q, \  \  D_{i,j} = (B_{j-1} -B_i) 2^{-\delta_L} \ \bmod\ q,$$
and for $j= i+1,\ldots,n$ compute
$$C_{i,j} = (A_{j} -A_i) 2^{-\delta_L} \ \bmod\ q, \ \  
D_{i,j} = (B_{j} -B_i) 2^{-\delta_L} \ \bmod\ q.$$

\item Consider the lattice $\mathcal{L}_i$ spanned by the rows of the matrix
\[
J_i =   \left( \begin{array}{ccccccc}
		2^{\delta+1}q & 0 & 0 & \ldots  & 0 & 0 \\
		0             & 2^{\delta+1} q & 0 & \ldots  & 0 & 0 \\
		0             & 0 & 2^{\delta+1} q & \ldots  & 0 & 0 \\
		\vdots        & \vdots & \vdots & \ddots & \vdots & \vdots \\
		0             &    0  & 0   &  \ldots & 2^{\delta+1} q  & 0\\
		2^{\delta+1}C_{i,1} & 2^{\delta+1}C_{i,2} & 2^{\delta+1}C_{i,3}  & \ldots & 2^{\delta+1}C_{i,n} & 1
\end{array}
\right)
\]
and compute a $LLL$-basis $\mathcal{B}_i$ for $\mathcal{L}_i$.

\item
Consider the vector
 $\mathbf{v}_i = (2^{\delta+1}D_{i,1}+2^{\ell},\ldots,2^{\delta+1}D_{i,n}+2^{\ell},0)$, and 
using Kannan's enumeration algorithm
 with basis $\mathcal{B}_i$, compute all
 $\mathbf{u}\in \mathcal{L}_i$ satisfying 
 $\|\mathbf{u}-\mathbf{v}_i\| < 2^{\ell}\sqrt{n+1}$. 
\item Check  whether the last coordinate of $\mathbf{u}$ say $u_{n+1}$ satisfies $g^{-u_{n+1}}\equiv R\pmod{q}$ (resp. $P(-u_{n+1}) = Q)$.
 If it is so, then return the secret key $-u_{n+1}\mod{q}=a$.
\end{enumerate}
\end{enumerate}
}
\end{algor}
\begin{remark}{\rm  For the Pseudocode of Kannan's Enumeration Algorithm, one
can see \cite[Section 5.1, Algorithm 10]{Hanrot2}.
}
\end{remark}
\begin{remark}{\rm
Supposing that condition (2) is satisfied, taking  $n$  quite small and $n\delta \geq \ell$, Theorem \ref{theorem} implies that the attack is polynomial
in $\ell$. Furthermore, if $s(L)$ is closed to the Gauss heuristic, then   the upper bound for the number of points of   $B_{\mathbf{v}}(R)\cap \mathcal{L}$ will be the smaller possible, and so 
it is expect that the attack will be quite efficient.
}
\end{remark}
{\bf Experiments.}
We conducted a thorough analysis of our experiments, and we compared our results with those presented by Gomez et al. \cite{Gomez}. Our findings indicate a significant improvement in almost all cases. Our experiments were conducted on a Linux machine with an i5-12400 CPU, using Sagemath 9.8 \cite{sage}. We made the assumption that we already knew the minimum ephemeral key. However, in the general case, where the minimum key is unknown, we would need to perform $n$ executions, where $n+1$ represents the number of signatures. This worst-case scenario would require multiplying the execution time of each experiment by $n$. Overall, our results demonstrate a notable improvement compared to the previous findings (see the Table below). Finally, we have successfully found the secret key even when the shared bits in the ephemeral keys are only 5. Remarkably, in this case, we only needed a minimum of 58 signatures. It is worth noting that in  \cite{Gomez}, no successful attack was provided for the specific scenario where $\delta=5.$

\begin{table}[h]
	\begin{center}
		\begin{tabular}{|c||c|c|}
			\hline
			$\delta:$ shared bits & signatures (\cite{Gomez}) & signatures (this paper)   \\ \hline
			$5$ &   $?$ & $58$  \\ \hline
			$6$ &   $\approx 50$ & $40$  \\ \hline
			$8$ &  $\approx 27$   & $25$ \\   \hline
			$10 $ & $\approx 20$  & $18$ \\   \hline
			$12$ &  $\approx 17$  & $14$ \\ \hline
			$14 $ &  $\approx 14$ & $12$ \\ \hline
			$16 $ &  $\approx 12$ & $11$ \\   \hline
			$18 $ &  $\approx 11$ & $9$ \\   \hline
			$20 $ &  $\approx 10$ & $8$ \\   \hline
		\end{tabular}
	\end{center}
	\caption{\footnotesize We considered $\ell=160$ and $R=2^{\ell}\sqrt{n+1}.$ We found the private key in every experiment we executed. Instead of LLL we used BKZ with block size $8.$ For each row, we conducted 10 random experiments, and the results in the third column were computed on average in under two minutes and thirty seconds.  For the cases, $\delta\in \{22,24,26,28,30\}$ we get the same number of signatures as in \cite{Gomez}. See \url{https://github.com/drazioti/dsa/}.	
}
\label{Tab:Table1_determenistic}
\end{table}

\section{Conclusion}
 
Attacks based on the malicious modification of memory registers
is a  topic of high importance, since it 
 may affect the randomness of the secret parameters by forcing a limited number of bits to a certain value, which can be unknown to the attacker.
 In this context,  we developed a deterministic attack on the DSA schemes,  
providing that several signatures are such that the corresponding 
ephemeral keys share a number of bits without knowing their value.

Our attack is deterministic, meaning that it  always produces a result for a given input every time. However, it is important to note that while the attack is deterministic, it may not always be practical to execute. Deterministic attacks on the (EC)DSA are relatively rare, as they typically rely on heuristic assumptions.  

While deterministic attacks on (EC)DSA, are rare, our attack demonstrates practical feasibility in specific scenarios, surpassing previous results, (see Table \ref{Tab:Table1_determenistic}). However, it is important to note that the practicality and effectiveness of our attack may vary depending on the specific choice of $(\delta,n)$.

\ \\
{\bf Acknowledgement}\\
The  author, Marios Adamoudis is co-financed by Greece
 and the European Union (European Social Fund-ESF) through the Operational
 Programme ”Human Resources Development, Education and Lifelong Learning”
 in the context of the Act ”Enhancing Human Resources Research Potential by
 undertaking a Doctoral Research” Sub-action 2: IKY Scholarship Programme for
 PhD candidates in the Greek Universities.

\begin{figure}[h] 
\centering{\includegraphics[scale = 0.4]{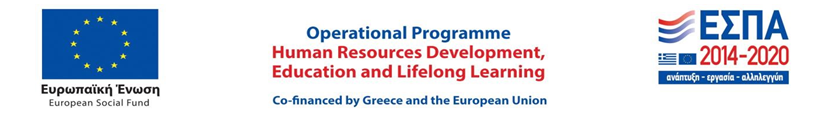}
}
\end{figure}

{\small

}

\end{document}